\documentclass{elsarticle-arxiv}

\usepackage[english]{babel}
\usepackage[utf8]{inputenc}
\usepackage[T1]{fontenc} 

\usepackage{amsfonts}
\usepackage{amsmath}
\usepackage{caption}
\usepackage{comment}
\usepackage{environ}
\usepackage{scrlayer-scrpage}
\usepackage{float}
\usepackage{fontaxes}
\usepackage{graphics}
\usepackage{hyperref}
\usepackage{manyfoot}
\usepackage{microtype}
\usepackage{mweights}
\usepackage{natbib}
\usepackage{nccfoots}
\usepackage{setspace}
\usepackage{totpages}
\usepackage{trimspaces}
\usepackage{upquote}
\usepackage{url}
\usepackage{xcolor}
\usepackage{xkeyval}
\usepackage{mathtools}
\usepackage{amssymb}

\definecolor{dark-blue}{rgb}{0.05,0.25,0.85}
\hypersetup{colorlinks=true,linkcolor=dark-blue,citecolor=dark-blue}

\usepackage[amsmath,thmmarks,hyperref]{ntheorem}
\usepackage[nameinlink]{cleveref}

\crefformat{page}{#2page~#1#3}%
\Crefformat{page}{#2Page~#1#3}%
\crefformat{equation}{#2(#1)#3}%
\Crefformat{equation}{#2(#1)#3}%
\crefformat{figure}{#2Figure~#1#3}%
\Crefformat{figure}{#2Figure~#1#3}%
\crefformat{section}{#2Section~#1#3}
\Crefformat{section}{#2Section~#1#3}
\crefformat{chapter}{#2Chapter~#1#3}
\Crefformat{chapter}{#2Chapter~#1#3}
\crefformat{chapter*}{#2Chapter~#1#3}
\Crefformat{chapter*}{#2Chapter~#1#3}
\crefformat{part}{#2Part~#1#3}
\Crefformat{part}{#2Part~#1#3}
\crefformat{enumi}{#2(#1)#3}
\Crefformat{enumi}{#2(#1)#3}
\crefformat{algorithm}{#2algorithm~#1#3}%
\Crefformat{algorithm}{#2Algorithm~#1#3}%

\usepackage{latexsym}



\theoremnumbering{arabic}
\theoremstyle{plain}
\theoremsymbol{}
\theorembodyfont{\itshape}
\theoremheaderfont{\normalfont\bfseries}
\theoremseparator{}

\theoremseparator{.}
\newtheorem{theorem}{Theorem}
\crefformat{theorem}{#2Theorem~#1#3}
\Crefformat{theorem}{#2Theorem~#1#3}

\newcommand{\newtheoremwithcrefformat}[2]{%
  \newtheorem{#1}[theorem]{#2}%
  \crefformat{#1}{##2\MakeUppercase#1~##1##3}%
  \Crefformat{#1}{##2\MakeUppercase#1~##1##3}%
}

\newtheoremwithcrefformat{proposition}{Proposition}
\newtheoremwithcrefformat{observation}{Observation}
\newtheoremwithcrefformat{lemma}{Lemma}
\newtheoremwithcrefformat{conjecture}{Conjecture}
\newtheoremwithcrefformat{corollary}{Corollary}
\theorembodyfont{\upshape}
\newtheoremwithcrefformat{example}{Example}
\newtheoremwithcrefformat{remark}{Remark}

\newtheoremwithcrefformat{definition}{Definition}

\theoremsymbol{\ensuremath{\square}}
\theoremstyle{nonumberplain}
\theoremseparator{}
\theoremheaderfont{\scshape}
\theorembodyfont{\normalfont}
\theoremsymbol{\ensuremath{\square}}
\newtheorem{proof}{Proof.}


\newcommand{\sref}[2]{\hyperref[#2]{#1~\ref{#2}}}

\newcommand{\Oof}{\mathcal{O}}

\newcommand{\Gg}{\mathcal{G}}
\newcommand{\Ff}{\mathcal{F}}

\newcommand{\N}{\mathbb{N}}

\usepackage{tikz}
\usetikzlibrary{decorations.pathreplacing}
\tikzstyle{vertex}=[circle,inner sep=1.5,minimum size =1.5mm,semithick,fill=black, draw=black]

\title{Greedy domination on biclique-free graphs}
\author{Sebastian Siebertz}
\address{Humboldt-Universit\"at zu Berlin}
\ead{siebertz@informatik.hu-berlin.de}
\fntext[fn0]{Full address: Sebastian Siebertz, Humboldt-Universit\"at zu Berlin, Institut f\"ur Informatik, Chair for Algorithm Engeneering, email: \texttt{sebastian.siebertz@hu-berlin.de}}
\fntext[fn1]{This work 
was supported by the National Science Centre of 
Poland via POLONEZ grant agreement UMO-2015/19/P/ST6/03998, 
which has received funding from the European Union's Horizon 2020 research and innovation programme (Marie Sk\l odowska-Curie grant agreement No.\ 665778).}

\begin{document}

\begin{frontmatter}
\begin{abstract}
The greedy algorithm for approximating dominating
sets is a simple method that is known to compute a factor
$(\ln n+1)$ approximation of a minimum dominating set on 
any graph with $n$ vertices. We show that a small modification of the 
greedy algorithm can be used to compute a factor $\Oof(t\cdot \ln k)$ approximation, where~$k$ is the size of a minimum dominating
set, on graphs that exclude the complete bipartite graph $K_{t,t}$
as a subgraph. 
\end{abstract}
\begin{keyword}
Dominating set problem, approximation algorithms, greedy algorithms, 
structural graph theory. 
\end{keyword}
\end{frontmatter}
\begin{picture}(0,0) \put(310,-303)
{\hbox{\includegraphics[scale=0.2]{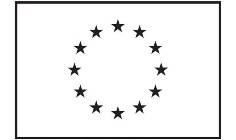}}} \end{picture} 
\vspace{-0.8cm}
\section{Introduction}

A dominating set in an undirected and simple
graph $G$ is a set $D\subseteq V(G)$
such that every vertex $v\in V(G)$ lies either in $D$ or has a neighbour
in $D$. The \textsc{Minimum Dominating Set} problem
takes a graph $G$ as input and the objective is to find a 
minimum size dominating set of~$G$. 
The corresponding decision problem is 
NP-hard~\cite{karp1972reducibility} and
this even holds in very restricted settings, e.g.\ on planar graphs of
maximum degree~$3$~\cite{garey2002computers}.

The following greedy algorithm approximates \textsc{Minimum 
Dominating Set} in an $n$-vertex graph $G$ up to 
a factor 
$H_n=\sum_{i=1}^n1/i\leq (\ln n+1)$~\cite{johnson1974approximation,lovasz1975ratio}. 
Starting with the empty dominating set~$D$, the
algorithm iteratively adds vertices to~$D$ according to the
following greedy rule until all vertices are dominated: in each round, choose the vertex
\mbox{$v\in V(G)$} that dominates the largest number of vertices
which still need to be dominated.
The greedy algorithm on general graphs is almost optimal: 
it is NP-hard to approximate \textsc{Minimum Dominating Set}
within factor $c\cdot \ln n$ for some constant~$c>0$~\cite{raz1997sub}, and by a recent result it is even NP-hard to 
approximate \textsc{Minimum Dominating Set} within factor 
$(1-\epsilon)\cdot \ln n$ for every~$\epsilon>0$~\cite{dinur2014analytical}.

On several restricted graph classes \textsc{Minimum Dominating Set} can 
be approximated much better. For instance, the problem 
admits a polynomial-time approximation scheme (PTAS) on planar
graphs~\cite{baker1994approximation} and, more generally, on graph
classes with subexponential expansion~\cite{har2017approximation}. 
It admits a constant factor approximation on classes of bounded 
arboricity~\cite{bansal2017tight} and an $\Oof(d\cdot \ln k)$ approximation (where $k$ denotes the size of a minimum dominating
set) on classes of VC-dimension~$d$~\cite{bronnimann1995almost,even2005hitting}. 
While the above algorithms on restricted graph classes yield
good approximations, they are computationally much more complex
than the greedy algorithm. Unfortunately, the greedy algorithm
does not provide any better approximation on these restricted
graph classes than on general graphs (see for example 
Section~4 of~\cite{bronnimann1995almost} for an instance of the set
cover problem, which can easily be transformed into a planar
instance of the dominating set problem, where the greedy algorithm 
achieves only an $\Omega(\ln n)$ approximation). Jones et 
al.~\cite{jones2013parameterized} showed how to slightly change
the classical greedy algorithm to obtain a constant factor
approximation algorithm on sparse graphs, more precisely, 
the algorithm computes a $d^2$ approximation of 
\textsc{Minimum Dominating Set} 
on any graph of degeneracy at most $d$.

\paragraph{Our results}

We follow the approach of Jones et al.~\cite{jones2013parameterized}
and study small modifications of the greedy algorithm
which lead to improved approximations on restricted graph
classes. 
We denote the complete bipartite graph with~$i$ vertices 
on one side and $j$ vertices on the other side by $K_{i,j}$. 

We
present a greedy algorithm which takes as input a graph $G$
and an optional
parameter $i\in \N$. If run with the integer 
parameter $i$ and $G$ excludes $K_{i,j}$ as a subgraph
for some $j\geq 1$, then the algorithm computes an 
$\Oof\big(i^2\cdot \ln k+ i\cdot \ln j\big)$ approximation of
\textsc{Minimum Dominating Set} (where $k$ denotes the size
of a minimum dominating set in $G$). 
%
If run without the integer parameter
$i$, the algorithm outputs the largest subgraph~$K_{t,t}$ that it
found during its computation, as well as an 
$\Oof(t^2\cdot \ln k+$ $t\cdot \ln t)= \Oof(t^2\cdot \ln k)$ approximation of \textsc{Minimum Dominating Set}. 
By running the classical greedy algorithm in parallel, the approximation
ratios can be improved to $\Oof(i \cdot \ln k+\ln j)$, and 
$\Oof(t\cdot \ln k)$, respectively. 

Based on a known hardness result for the set cover problem
on families with intersection $1$ it is easy to show that 
it is unlikely that polynomial time constant 
factor approximations exist even on $K_{3,3}$-free graphs.

\paragraph{Comparison to other algorithms}

Every $K_{t,t}$-free graph has VC-dimension at most~$t$, hence
the algorithms of~\cite{bronnimann1995almost,even2005hitting}
achieve $\Oof(t\cdot \ln k)$ approximations on the graphs we
consider. The algorithm presented
in~\cite{bronnimann1995almost} is based on finding $\epsilon$-nets
with respect to a weight function and a polynomial number of 
reweighting steps. The algorithm presented in~\cite{even2005hitting}
requires solving a linear program. Hence, even though these
algorithms achieve the same approximation bounds as 
our modified greedy algorithm, 
our algorithm is much easier to implement and has much better
running times. On the other hand, $K_{t,t}$-free-graphs are
strictly more general than degenerate graphs. Hence, our algorithm
is applicable to a more general class of graphs than the algorithm
of Jones et al.~\cite{jones2013parameterized}. 



\section{The greedy algorithm on biclique-free graphs}

We first consider the following greedy algorithm which takes as 
input an optional parameter $i\in \N$ and a graph $G$. We start by
presenting how the algorithm works if the parameter $i$ is given 
with the input. 


We initialise $D_0\coloneqq \emptyset$ and $A_0\coloneqq V(G)$. 
The set $D_0$ denotes the initial dominating set and $A_0$ 
denotes the set of vertices that have to be dominated. The algorithm
runs in rounds and in every round it makes a greedy choice 
on a few vertices to add to the dominating set, 
until no vertices remain to be dominated. 
Formally, in each round $m=1,\ldots,$ we construct a new set
$D_m$ which is obtained from $D_{m-1}$ by adding at 
most~$i-1$ vertices $v_1,\ldots,v_\ell$. 
The set $A_m$ is obtained from $A_{m-1}$ by
removing $v_1,\ldots,v_\ell$ and their neighbours. 
We output the set $D_m$ as a dominating set, 
when $A_m=\emptyset$. 

Let us describe a round of the modified greedy algorithm. 
Assume that after round~$m$ we have constructed a partial dominating
set $D_m$ and vertices $A_m$ remain to be dominated. We 
choose $\ell$ vertices $v_1,\ldots, v_\ell$, $\ell< i$, as follows. 
We choose as~$v_1$ an arbitrary vertex that dominates the
largest number of vertices which still need to be dominated, 
i.e., a vertex which maximises $|N[v_1]\cap A_m|$. Here, 
$N[v]$ denotes the neighbourhood of a vertex~$v$, including
the vertex $v$. Let $B_1\coloneqq (N[v_1]\cap A_m)\setminus \{v_1\}$. 
We continue to choose vertices $v_2,\ldots, v_\ell$ inductively
as follows. If the vertices $v_1,\ldots,v_s$ and sets 
$B_1,\ldots, B_s\subseteq V(G)$ have been defined, we choose 
the next vertex $v_{s+1}$ as an arbitrary vertex not in $\{v_1,\ldots,
v_s\}$ that dominates
the largest number of vertices of $B_s$, i.e., a vertex which
maximises $|N[v_{s+1}]\cap B_s|$ and let 
$B_{s+1}\coloneqq (N[v_{s+1}]\cap B_s)\setminus \{v_{s+1}\}$. 
We terminate this round and add $v_1,\ldots, v_\ell$ to $D_{m+1}$ if 
either we have $\ell=i-1$, or $N[v]\cap B_\ell=\emptyset$ for each 
$v\in V(G)\setminus\{v_1,\ldots, v_\ell\}$. We mark the vertices $v_1,\ldots, v_\ell$ and their neighbours as dominated, i.e., we remove
from the set $A_m$ all vertices of $\bigcup_{1\leq m\leq \ell} N[v_m]$
to obtain the set $A_{m+1}$ and start the next round. 

\smallskip
The crucial difference between the above modified greedy algorithm and the 
classical greedy algorithm is that the former is guaranteed to choose in 
every round $m$ at least one vertex from \emph{every} 
minimum dominating set for $A_m$, given that $A_m$ is still large. 
This is made precise in the following lemma.  

\begin{lemma}\label{lem:A-large}
Let $G$ be a graph which excludes $K_{i,j}$ as a subgraph. Let 
$A_m\subseteq V(G)$ be a set of vertices to be dominated 
and let $M$ be a dominating set of $A_m$ of size~$k$ in $G$. If 
$|A_m|\geq k^{i}\cdot (j+i)$, 
then the algorithm applied to $A_m$ will find vertices 
$v_1,\ldots, v_\ell$ with $M\cap \{v_1,\ldots, v_\ell\}\neq
\emptyset$. 
\end{lemma}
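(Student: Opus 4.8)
The plan is a proof by contradiction: assume the round produces $v_1,\ldots,v_\ell$ with $M\cap\{v_1,\ldots,v_\ell\}=\emptyset$ and show that $G$ then contains $K_{i,j}$ as a subgraph. Note that $\ell\le i-1$ throughout, and since the lemma is only used with $A_m\neq\emptyset$ we have $k\ge 1$.

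The first step is a pigeonhole recursion controlling the sizes of the sets $B_s$. Since $B_s\subseteq A_m$ and $M$ dominates $A_m$, we have $\sum_{u\in M}|N[u]\cap B_s|\ge|B_s|$, so some $u\in M$ has $|N[u]\cap B_s|\ge|B_s|/k$; by the contradiction hypothesis this $u$ is not among $v_1,\ldots,v_s$, hence it is an eligible choice at step $s+1$, so $|N[v_{s+1}]\cap B_s|\ge|B_s|/k$ and therefore $|B_{s+1}|\ge|B_s|/k-1$. Iterating from $B_0=A_m$ and using $k\ge1$ gives $|B_s|\ge|A_m|/k^{s}-s$ for all $s\le\ell$. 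Plugging in $|A_m|\ge k^{i}(j+i)$ and $s\le i-1$ yields $|B_s|\ge k(j+i)-(i-1)\ge j+1>0$; and for $s=i-1$ the same estimate together with $(i-1)(k-1)\ge0$ gives the sharper bound $|B_{i-1}|\ge k(j+1)$.

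Next I would split according to why the round terminated. If it terminated because no vertex outside $\{v_1,\ldots,v_\ell\}$ has a neighbour in $B_\ell$, then since $M$ avoids $\{v_1,\ldots,v_\ell\}$, no vertex of $M$ dominates any vertex of $B_\ell$; but $M$ dominates $A_m\supseteq B_\ell$, forcing $B_\ell=\emptyset$ and contradicting $|B_\ell|>0$. Otherwise the round terminated with $\ell=i-1$, and here I construct the biclique explicitly. From $B_s=(N[v_s]\cap B_{s-1})\setminus\{v_s\}$ one gets the nesting $B_{i-1}\subseteq B_s\subseteq N[v_s]$ and $v_s\notin B_{i-1}$, so each of the $i-1$ pairwise distinct vertices $v_1,\ldots,v_{i-1}$ is adjacent to every vertex of $B_{i-1}$. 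Using $|B_{i-1}|\ge k(j+1)$ and pigeonhole once more, some $u\in M$ has $|N[u]\cap B_{i-1}|\ge j+1$; since $u\in M$ we get $u\notin\{v_1,\ldots,v_{i-1}\}$, and discarding $u$ from $N[u]\cap B_{i-1}$ (if present) leaves $W\subseteq B_{i-1}$ with $|W|\ge j$, disjoint from $\{v_1,\ldots,v_{i-1},u\}$ and contained in $N[v]$ for every $v\in\{v_1,\ldots,v_{i-1},u\}$. Then $\{v_1,\ldots,v_{i-1},u\}$ together with any $j$ vertices of $W$ spans a copy of $K_{i,j}$, contradicting that $G$ excludes $K_{i,j}$.

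I expect the only real delicacy to be calibrating the constants in the recursion: the additive $i$ in the hypothesis $|A_m|\ge k^{i}(j+i)$ is exactly what is needed to push $|B_{i-1}|$ past $k(j+1)$ in the $\ell=i-1$ case (through $(i-1)(k-1)\ge0$), while the $j$ is what keeps $|B_\ell|$ positive in the early-termination case; one should also sanity-check the degenerate regimes $k\le1$ and $i\le1$, where the statement is either immediate or vacuous.
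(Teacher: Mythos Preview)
Your proof is correct and follows essentially the same approach as the paper's: a pigeonhole recursion using $M$ to lower-bound $|B_s|$, followed by a final pigeonhole step producing an $i$-th vertex $u\in M\setminus\{v_1,\ldots,v_{i-1}\}$ whose common neighbourhood with the $v_s$ in $B_{i-1}$ yields the forbidden $K_{i,j}$. Your recursion $|B_s|\ge |A_m|/k^{s}-s$ differs in form from the paper's $|B_x|\ge k^{i-x}(j+i-x)$, but both give $|B_{i-1}|\ge k(j+1)$; and you handle the early-termination case explicitly, which the paper leaves implicit.
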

\begin{proof}
By assumption, $A_m$ is dominated by the set $M$ of size $k$. Hence
there must exist a vertex $v_1\in V(G)$ which dominates at least
a $1/k$ fraction of $A_m$, that is, at least $k^{i-1}\cdot (j+i)$ vertices
of $A_m$. Let $B_1\coloneqq (N[v_1]\cap A_m)\setminus\{v_1\}$, 
hence $|B_1|\geq k^{i-1}\cdot (j+i)-1\geq k^{i-1}\cdot (j+i-1)$. 

Assume $v_1\not\in M$. We repeat the same argument
as above for $B_1$. Also $B_1$ is dominated by $M$ of size $k$, hence 
there must exist a vertex $v_2\in V(G)$ which
dominates at least a $1/k$ fraction of $B_1$, that is, at least 
$k^{i-2}\cdot (j+i-1)$ vertices of~$B_1$. Let $B_2\coloneqq 
(N[v_2]\cap B_1)\setminus\{v_2\}$, 
hence $|B_2|\geq k^{i-2}\cdot (j+i-1)-1\geq k^{i-2}\cdot (j+i-2)$. 
We repeat the argument for $v_2,v_3,\ldots,v_{\ell}$ and $B_2,
B_3,\ldots, B_{\ell}$, each $B_x$ 
for $0\leq x\leq \ell<i$ (set $B_0=A_m$) of size at 
least $k^{i-x}\cdot (j+i-x)$, ending with a set~$B_{\ell}$ of 
size at least $k\cdot (j+1)$. 

Hence, assuming that $v_1,\ldots, v_\ell\not\in M$, we have $\ell=i-1$
and there must 
exist a vertex~$v$ with 
$|N[v]\cap B_\ell\setminus \{v\}|\geq j$. Fix any subset $B=\{w_1,\ldots, w_j\}$ 
of $N[v]\cap B_\ell\setminus \{v\}$ of size exactly $j$. Then the vertices
$v,v_1,\ldots, v_{i-1}$ and the vertices $w_1,\ldots, w_j$
form a subgraph $K_{i,j}$, contradicting that
such a subgraph does not exist in $G$. Hence, one of 
$v_1,\ldots, v_\ell$ must be contained in $M$. 
\end{proof}

Hence, as long as it remains to dominate a large set $A_m$, 
the modified greedy algorithm makes an almost optimal choice. 
Once we are left with a small set~$A_m$, it performs only slightly
worse than the classical greedy algorithm. 

\begin{theorem}
If $G$ is a graph which excludes $K_{i,j}$ as a subgraph, then
the modified greedy algorithm called with parameter $i$ 
computes an $\Oof(i^2\cdot \ln k+ i\cdot \ln j)$ approximation of 
a minimum dominating set of $G$, where $k$ is the size of a 
minimum dominating set of $G$.
\end{theorem}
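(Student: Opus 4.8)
The plan is to follow the undominated set $A_m$ through the execution and to split the run into two phases at the threshold $k^{i}\cdot(j+i)$ coming from \cref{lem:A-large}. Fix once and for all a minimum dominating set $D^*$ of $G$, so $|D^*|=k$. Since $A_m$ is non-increasing in $m$, the rounds with $|A_m|\ge k^{i}\cdot(j+i)$ form a prefix, which I call Phase~1, and the remaining rounds form Phase~2; in particular, Phase~2 is entered with an undominated set of size less than $k^{i}\cdot(j+i)$.

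For Phase~1, the point is that $D^*$ dominates all of $V(G)$ and hence is a dominating set of $A_m$ of size exactly $k$, so \cref{lem:A-large} applies with $M=D^*$: in every Phase~1 round at least one of the chosen vertices $v_1,\dots,v_\ell$ lies in $D^*$. Next I would check that, in any round in which $A_m\neq\emptyset$, none of $v_1,\dots,v_\ell$ was already chosen in an earlier round: a vertex $v$ chosen earlier had all of $N[v]$ removed from the undominated set, so $|N[v]\cap A_m|=0$, and likewise $|N[v]\cap B_s|=0$ for every $s$, whereas each greedy step of the round picks a vertex with a strictly positive count (any element of $A_m$, respectively of $B_s$, witnesses a positive count). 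Consequently the vertex of $D^*$ captured in a Phase~1 round is always one not captured before, so Phase~1 consists of at most $|D^*|=k$ rounds and adds at most $k(i-1)$ vertices to the output.

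For Phase~2 I would use the standard greedy estimate. In each round the first pick $v_1$ maximises $|N[v]\cap A_m|$, and since the $k$ vertices of $D^*$ dominate $A_m$, some vertex dominates at least a $1/k$ fraction of $A_m$; hence $|A_{m+1}|\le |A_m|\cdot(1-1/k)$, the further picks $v_2,\dots,v_\ell$ only removing more. As Phase~2 starts with fewer than $k^{i}\cdot(j+i)$ undominated vertices, iterating this estimate (equivalently, the $H_n\le\ln n+1$ bound for the greedy rule, applied to this universe, for which $D^*$ is still a dominating set of size at most $k$) shows that after at most $k\cdot\big(\ln(k^{i}\cdot(j+i))+1\big)=\Oof\big(k\,(i\ln k+\ln(j+i))\big)$ further rounds no undominated vertex remains. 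Since each round adds at most $i-1$ vertices, Phase~2 contributes $\Oof\big(k\,(i^{2}\ln k+i\ln(j+i))\big)$ vertices.

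Adding the two phases and absorbing lower-order terms (the $k(i-1)$ of Phase~1, and $\ln(j+i)=\Oof(\ln j+\ln i)$ together with the $i\ln i$ term) gives an output of size $\Oof\big(k\,(i^{2}\ln k+i\ln j)\big)$, which is an $\Oof(i^{2}\ln k+i\ln j)$ approximation of a minimum dominating set. I expect the main obstacle to be the Phase~1 bookkeeping: turning the ``one vertex of $D^*$ per round'' guarantee of \cref{lem:A-large} into a bound of $k$ on the number of Phase~1 rounds requires the observation that previously chosen vertices dominate nothing that still needs domination, and one must be careful to instantiate \cref{lem:A-large} with the \emph{global} optimum $D^*$ rather than a round-dependent one.
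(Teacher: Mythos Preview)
Your proposal is correct and follows essentially the same two-phase argument as the paper: use \cref{lem:A-large} to bound Phase~1 by at most $k$ rounds (hence at most $ik$ vertices), then apply the standard $(1-1/k)$ multiplicative-decrease analysis in Phase~2 with initial size at most $k^{i}(j+i)$, and combine. Your explicit justification that no previously chosen vertex can be re-selected in a later round (because it now dominates nothing in $A_m$) is a detail the paper leaves implicit; otherwise the arguments coincide.
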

\begin{proof}
Fix any minimum dominating set $M$ of size $k$ of $G$. 
By \cref{lem:A-large}, as long as it remains to dominate
a set of size at least $k^i\cdot (j+i)$, the modified greedy
algorithm chooses in every round at least one vertex of $M$. 
Hence, when it remains to dominate a set $A_m$ of size
smaller than $k^i\cdot (j+i)$, the algorithm has chosen
at most $i\cdot k$ vertices. 

Once we have reached this situation, let $n\coloneqq |A_m|
\leq k^i\cdot (j+i)$. We argue just as  
in the proof of \cref{lem:A-large} that there exists a vertex 
$v\in V(G)$ which dominates at least a $1/k$ fraction of 
$A_m$, that is, a subset of $A_m$ of size at least~$n/k$. The algorithm
chooses such a vertex together with at most $i$ other vertices 
which in the worst case dominate nothing else. Hence after
the first round we are 
left to dominate at most $n_1=n-n/k=n\cdot (1-1/k)$
vertices. In the second round, we find again a vertex which dominates
at least a $1/k$ fraction of the remaining vertices, hence
after the second round we are left to dominate at most 
$n_2=n_1-n_1/k=n_1\cdot (1-1/k)=n\cdot (1-1/k)^2$ vertices. 
We repeat this argumentation and conclude that after 
executing $x$ rounds of the algorithm it remains to dominate
at most $n_{x}=n\cdot (1-\frac{1}{k})^{x}$ elements.
Let us determine for what value of $x$ we have $n_x < 1$, in
which case we have dominated all vertices. 

We have $n_x\leq n\cdot (1-1/k)^x<n\cdot e^{-x/k}$, where the last inequality follows from the bound $1-z<e^{-z}$, which holds
for all $z>0$. Thus, for $x\geq k\cdot \ln n$ we have $n_x<n\cdot e^{-\ln n}=1$. 
We conclude that the algorithm terminates after at most
$k\cdot \ln n$ steps, in particular, it computes a dominating set of
size at most~$i\cdot k\cdot \ln n$. 
Now, as $n\leq k^i\cdot (j+i)$, 
we have $\ln n\in \Oof(i\cdot \ln k+\ln j)$. Hence, in total 
the set has size at most 
$\Oof\big(i\cdot k + (i^2\cdot \ln k+i\cdot \ln j)\cdot k\big)
\in \Oof\big((i^2\cdot \ln k+i\cdot \ln j)\cdot k\big)$. 
\end{proof}


With slightly more computational effort 
we can compute an $\Oof(i\cdot \ln k+\ln j)$ approximation on
$K_{i,j}$-free graphs (and an $\Oof(t\cdot \ln k)$ approximation on 
$K_{t,t}$-free graphs, respectively) as follows. 
For each of the sets $D_0, D_1,\ldots$ constructed in the course of 
the algorithm, run the standard greedy algorithm to extend it to a
dominating set, and return the smallest of the sets obtained in this way.
Letting~$p$ be the first index such that $D_p$ dominates all but at 
most $n=k^i\cdot (i+j)$ vertices of the graph, the above argument 
shows that $|D_p|\leq i\cdot k$. The standard greedy algorithm then 
adds at most $\ln n\cdot k \in \Oof((i\cdot \ln k + \ln j)\cdot k)$ further vertices 
to the dominating set, resulting in a dominating set of size 
$\Oof((i\cdot \ln k + \ln j)\cdot k)$.

\smallskip
We now modify the algorithm slightly to work without the parameter $i$. 
In each round let the algorithm choose elements $v_1,\ldots, 
v_{\ell}$, defining sets $B_1,\ldots, B_\ell$ in the above notation, until 
we do not find a vertex $v_{\ell+1}$ defining a set $B_{\ell+1}$ with 
$|B_{\ell+1}|\geq \ell+1$ any more. Let $t=\ell+1$ for the largest 
$\ell$ that was encountered in any round. Hence, the modified 
algorithm chooses at most $t-1$ elements in every round. 
Observe that in this 
construction, when we are at step $i$ and the corresponding 
set~$B_i$ has size at least $j$, $1\leq i\leq \ell$, $j\geq 1$, 
then we have found a subgraph $K_{i,j}$. 
Hence, $t$ is the least number such that the algorithm did not 
find $K_{t,t}$ as a subgraph and we can argue as above
that the algorithm performs as if $K_{t,t}$ was excluded from 
$G$. We output $K_{t-1,t-1}$ as a witness for this performance
guarantee.

\smallskip
Finally, note that the algorithm can be used to approximate the 
minimum size of a set which dominates a given subset $S$ of vertices 
of the graph, by initializing $A_0=S$ instead of $A_0=V(G)$.

\section{Hardness beyond degenerate graphs}

By the result of Bansal and Umboh~\cite{bansal2017tight} one
can compute a $3d$ approximation of a minimum dominating
set on any $d$-degenerate graph. The approximation factor
was improved to $2d+1$ by Dvo\v{r}\'ak~\cite{dvovrak2017distance}. 
To the best of our knowledge, degenerate
graphs are currently the most general graphs on which polynomial
time constant factor approximation algorithms for the dominating 
set problem are known. It is easy to see that the existence of such
algorithms on bi-clique free graphs, even on $K_{3,3}$-free graphs,
is unlikely. This result is a simple consequence of the following
result of Kumar et al.~\cite{kumar2000hardness}. Given a family $\Ff$ of 
subsets of a set~$A$, a \emph{set cover} is a 
subset $\Gg\subseteq \Ff$
such that $\bigcup_{F\in \Gg} F=A$. The 
\textsc{Minimum Set Cover} problem
is to find a minimum size set cover. The \emph{intersection}
of a set family $\Ff$ is the maximum size of the intersection
of two sets from $\Ff$. 

\begin{theorem}[Kumar et al.~\cite{kumar2000hardness}]
The {\normalfont\scshape Minimum Set Cover} problem on set families of intersection~$1$
cannot be approximated to within a factor of 
$c\frac{\log n}{\log\log n}$ for some constant $c$ in 
polynomial time unless for some constant 
$\epsilon< 1/2$ it holds that $\textsc{NP}\subseteq 
\textsc{DTIME}(2^{n^{1-\epsilon}})$. 
\end{theorem}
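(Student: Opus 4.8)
Since the statement is quoted verbatim from Kumar, Arya and Ramesh~\cite{kumar2000hardness}, the plan is to reconstruct the shape of their argument rather than to invent a new one; this is a reduction, not a direct combinatorial lemma. The starting point is a known inapproximability result for \emph{unrestricted} \textsc{Minimum Set Cover} --- the $\Theta(\log n)$-type gap coming from the Lund--Yannakakis/Feige machinery, i.e.\ a two-prover one-round game composed with a ``partition system''. Such an instance $(A,\Ff)$ has a universe of size $N$ and a gap of order $\log N$, but two sets of $\Ff$ may share arbitrarily many elements, so it is very far from intersection $1$. The whole task is to engineer a \emph{gap-preserving} reduction from this to the intersection-$1$ case.

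First I would build a ``decoupling'' gadget that converts a set system of maximum pairwise intersection $r$ into one of maximum pairwise intersection about $\lceil r/2\rceil$ (or $\lceil\sqrt{r}\,\rceil$), paying only a polynomial increase in size and a small additive increase in the optimum. The natural device is a combinatorial incidence structure with the intersection-$1$ property built in --- a projective-plane-like geometry (a plane of order $q$ has $q^2+q+1$ points and lines, every two lines meeting in exactly one point) or, more flexibly, block designs / near-linear hypergraphs. Concretely, an element $a$ lying in sets $F_1,\dots,F_d$ is replaced by a copy of such a structure in which each $F_i$ is routed through a distinct line, so that the sets that used to agree on $a$ now agree on a single representative point. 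Iterating this halving step $O(\log N)$ times drives the maximum intersection down to $1$; I would also need to check that the ``covering'' semantics survive each step, i.e.\ that a bundle of lines covering all points corresponds to covering $a$.

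Then I would prove the two directions of the gap. Completeness is routine: a cover of $(A,\Ff)$ of size $k$ lifts, layer by layer, to a cover of the transformed instance of size $k$ times a sub-polynomial factor (plus $O(1)$ per gadget). Soundness is the crux: one must show that \emph{any} cover of the transformed instance can be decoded back to a cover of $(A,\Ff)$ that is not much smaller, so that the gadget offers no ``shortcut'' letting the original elements be covered below the optimum. Pushing this decoding through all $O(\log N)$ layers while the loss at each layer stays a constant factor is where I expect the real difficulty to lie --- and it is also the source of the degradation in the theorem: the final universe size $n$ is $N$ raised to a polylogarithmic power, which turns the clean $\Theta(\log N)$ gap into $\Theta\!\big(\tfrac{\log n}{\log\log n}\big)$.

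Finally I would do the running-time bookkeeping. Because the reduction blows up instance size quasi-polynomially, starting from $3$-\textsc{Sat} on $m$ variables it produces intersection-$1$ instances of size $n \le 2^{m^{\delta}}$ for some constant $\delta<1$; a polynomial-time $c\,\tfrac{\log n}{\log\log n}$-approximation would then decide satisfiability in time $\mathrm{poly}(n) = 2^{O(m^{\delta})} \le 2^{m^{1-\epsilon}}$, yielding $\textsc{NP}\subseteq\textsc{DTIME}(2^{n^{1-\epsilon}})$. One could instead start from a sub-constant-error two-prover PCP to squeeze the constants, but the architecture --- hard base instance, iterated design-based decoupling, completeness/soundness, parameter tracking --- would be unchanged.
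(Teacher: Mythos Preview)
The paper does not prove this theorem at all: it is quoted verbatim from Kumar, Arya and Ramesh and used as a black box to derive the subsequent hardness result for \textsc{Minimum Dominating Set} on $K_{3,3}$-free graphs. There is therefore nothing in the paper to compare your sketch against.

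Your proposal --- start from a Feige-type hard \textsc{Set Cover} instance, iteratively decouple shared elements via an incidence structure with built-in intersection~$1$ (projective planes or designs), track the gap through $O(\log N)$ layers, and account for the quasi-polynomial blow-up as the source of both the $\log n/\log\log n$ factor and the subexponential-time hypothesis --- is a reasonable high-level reconstruction of the kind of argument one expects in \cite{kumar2000hardness}. But since the present paper treats the result as a citation, any such reconstruction is orthogonal to what the paper actually does; if your goal is to match the paper, the correct ``proof'' here is simply the reference.
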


Now it is easy to derive the following theorem. 

\begin{theorem}\label{thm:hardness}
The {\normalfont\scshape Minimum Dominating Set} problem on $K_{3,3}$-free graphs 
cannot be approximated to within a factor of 
$c\frac{\log n}{\log\log n}$ for some constant $c$ in 
polynomial time unless for some constant 
$\epsilon< 1/2$ it holds that $\textsc{NP}\subseteq 
\mathrm{DTIME}(2^{n^{1-\epsilon}})$. 
\end{theorem}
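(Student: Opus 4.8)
The plan is to reduce \textsc{Minimum Set Cover} on set families of intersection~$1$ to \textsc{Minimum Dominating Set} on $K_{3,3}$-free graphs via the standard incidence construction, and to check that the intersection-$1$ hypothesis is exactly what forces the resulting graph to exclude $K_{3,3}$ as a subgraph. Given an instance $(A,\Ff)$ of intersection~$1$, build a graph $G$ with vertex set $A\cup\Ff$ (plus a gadget, see below), placing an edge between $a\in A$ and $F\in\Ff$ whenever $a\in F$, making $\Ff$ into a clique, and attaching to each element vertex $a\in A$a private pendant vertex $a'$ so that $a$ can only be dominated by itself, by $a'$, or by some set $F\ni a$. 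A dominating set of $G$ then corresponds, after replacing each chosen element vertex $a$ or pendant $a'$ by any set containing~$a$, to a set cover of no larger size, and conversely a set cover $\Gg$ yields the dominating set $\Gg$ (every element vertex is covered, every pendant is dominated through its element's covering set, and the sets dominate each other by the clique); hence the optimum dominating set size equals the optimum set cover size up to an additive constant, and approximation factors transfer.

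The key steps, in order: first, describe the construction precisely and argue the correspondence between set covers and dominating sets, being careful that the pendant vertices kill the "shortcut" solutions that would otherwise beat the set cover optimum. Second, verify the $K_{3,3}$-freeness: any $K_{3,3}$ subgraph would need two sides of three vertices each with all nine edges present; pendant vertices have degree~$1$ so cannot participate; element vertices are pairwise non-adjacent and set vertices are pairwise adjacent, so one side must be three element vertices $a_1,a_2,a_3$ and the other three set vertices $F_1,F_2,F_3$ with each $F_p\supseteq\{a_1,a_2,a_3\}$ — but then $|F_1\cap F_2|\ge 3>1$, contradicting intersection~$1$. (One should double-check the mixed cases, e.g. whether a side could mix a set vertex with element vertices, but adjacency within $\Ff$ rules that out on the "independent" side and adjacency of a set to at most the elements it contains rules out the other mixed configuration after a short case check.) Third, observe that $n$, the number of vertices of $G$, is linear in the size of the set cover instance, so a factor $c\log n/\log\log n$ approximation for dominating set on $K_{3,3}$-free graphs would give a factor $c'\log N/\log\log N$ approximation for \textsc{Minimum Set Cover} on intersection-$1$ families; invoking the theorem of Kumar et al.\ yields the stated complexity-theoretic conclusion.

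The main obstacle is the $K_{3,3}$-freeness argument together with making the reduction \emph{approximation-preserving} rather than merely preserving exact optima: one must ensure no spurious small dominating sets arise from the structure of $G$ (this is what the pendants are for) and that the blow-up in instance size is only polynomial — in fact linear — so that the $\log n/\log\log n$ inapproximability bound is not diluted. Since $|V(G)| = |A| + |\Ff| + |A|$ and every set cover instance we care about has $|\Ff|$ polynomial in $|A|$, this is immediate, and the whole argument is routine once the construction is fixed; I expect no genuine difficulty, only the need to enumerate the adjacency cases carefully when proving $K_{3,3}$-freeness.
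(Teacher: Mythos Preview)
Your reduction has a genuine gap: the graph you build is \emph{not} $K_{3,3}$-free. You turn $\Ff$ into a clique, but the paper's notion of ``$K_{3,3}$-free'' is exclusion as a (not necessarily induced) subgraph. Any six set vertices $F_1,\ldots,F_6$ then already contain $K_{3,3}$: split them as $\{F_1,F_2,F_3\}$ versus $\{F_4,F_5,F_6\}$ and all nine cross edges are present because $\Ff$ is a clique. Your case analysis implicitly assumes the two sides of a $K_{3,3}$ must be independent in $G$ (``adjacency within $\Ff$ rules that out on the `independent' side''), which is exactly the induced-subgraph reading and is not what is required here.

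The paper avoids this by keeping the graph bipartite: instead of making $\Ff$ a clique, it adds a single apex vertex $x$ adjacent to all of $\Ff$ and a pendant $y$ adjacent only to $x$. Then $x$ (or $y$, replaceable by $x$) must lie in every dominating set, $x$ dominates all set vertices, and the correspondence between dominating sets and set covers goes through with an additive $+1$. Crucially, the whole graph is bipartite, so any $K_{3,3}$ subgraph respects the bipartition; after discarding $x$ (which lies on at most one side) and $y$ (degree~$1$), a $K_{2,2}$ would have to live inside the $A$--$\Ff$ incidence graph, forcing two sets to share two elements and contradicting intersection~$1$. Your pendant-per-element idea is fine for forcing genuine covers, but the clique on $\Ff$ has to go; replacing it with a single apex plus pendant is the minimal repair.
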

\begin{proof}
We present an approximation preserving reduction from 
\textsc{Minimum Set Cover} on instances of intersection $1$ 
to \textsc{Minimum Dominating Set} on $K_{3,3}$-free graphs.
Let $\Ff$ be an instance of \textsc{Minimum Set Cover} with
intersection $1$. Let $A=\bigcup_{F\in\Ff}F$. 
We compute in polynomial time an instance
of \textsc{Minimum Dominating Set} on a graph $G$ as follows. We let 
$V(G)=A\cup \Ff\cup \{x,y\}$, where $x,y$ are new vertices that
do not appear in $A$. We add all edges $\{u,F\}$ if $u\in F$, as well as all edges $\{x,F\}$ for $F\in\Ff$ and the edge $\{x,y\}$. 

Now if $\Gg\subseteq \Ff$ is a feasible solution for the \textsc{Minimum Set Cover} instance, then $\Gg$ (as a subset of $G$) together with the
vertex $x$ is a dominating set for $G$ of size at most $|\Gg|+1$.
Conversely, let $D$ be 
a dominating set for $G$. We construct another dominating set $X$ 
such that $|X|\leq |D|$ and $X\subseteq \Ff\cup\{x\}$. We simply 
replace each $u\in A$ by a neighbour $F\in\Ff$. Furthermore, if
$y\in D$, we replace $y$ by $x$. Observe that $x$ or $y$ must
belong to $D$, as~$y$ must be dominated. Hence, in any case, 
$x\in X$. Now $\Gg\cap X$ is a set cover of size $|X|-1$. Hence, 
the reduction preserves approximations. 

Let us show that $G$ excludes $K_{3,3}$ as a subgraph. Assume
towards a contradiction that $K_{3,3}\subseteq G$. Then 
$K_{2,3}\subseteq G-x$. Since $G$ is bipartite we find 
elements $a_1,a_2\in A$ and $F_1,F_2\in \Ff$ (as vertices of $G$)
with $\{a_i, F_j\}\in E(G)$, $i,j\in\{1,2\}$, which form a $K_{2,2}$
subgraph of this graph. By construction of~$G$ we have 
$|F_1\cap F_2|\geq |\{a_1,a_2\}|=2$, contradicting that $\Ff$ is a 
\textsc{Minimum Set Cover}
instance with intersection~$1$. 

Finally observe that the reduction is 
obviously polynomial time computable. 

\mbox{}
\end{proof}

\paragraph{Acklowledgements}
I thank Saket Saurabh for pointing me to the work of Jones et 
al.~\cite{jones2013parameterized}. I thank the anonymous reviewers
for their valuable comments, and in particular for pointing out that
the modified greedy algorithm can be improved to an 
$\Oof(t\cdot \ln k)$ approximation by running the classical greedy
algorithm in parallel.

\bibliographystyle{abbrv}
\bibliography{ref} 

\end{document}